\newtheorem{theorem}{Theorem}
\newtheorem{assumption}{Assumption}
\newtheorem{lemma}[theorem]{Lemma}
\newtheorem{definition}{Definition}
\newcommand{\x}{\mathbf{x}}
\newcommand{\p}{\mathbf{p}}
\newcommand{\e}{\mathbf{e}}
\newcommand{\xis}{\x_i^{*}}
\newcommand{\T}{\intercal}
\newcommand{\edit}[1]{#1}
\DeclareMathOperator{\LSE}{LSE}
\let\old@ps@headings\ps@headings
\let\old@ps@IEEEtitlepagestyle\ps@IEEEtitlepagestyle
\def\psccfooter#1{%
    \def\ps@headings{%
        \old@ps@headings%
        \def\@oddfoot{\strut\hfill#1\hfill\strut}%
        \def\@evenfoot{\strut\hfill#1\hfill\strut}%
    }%
    \def\ps@IEEEtitlepagestyle{%
        \old@ps@IEEEtitlepagestyle%
        \def\@oddfoot{\strut\hfill#1\hfill\strut}%
        \def\@evenfoot{\strut\hfill#1\hfill\strut}%
    }%
    \ps@headings%
}
        \parbox{\textwidth}{\hrulefill \\ \small{23rd Power Systems Computation Conference} \hfill \begin{minipage}{0.2\textwidth}\centering \vspace*{4pt} \includegraphics[scale=0.06]{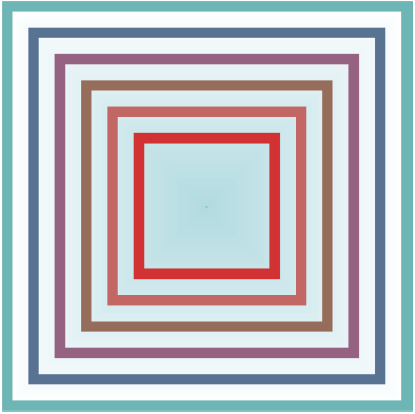}\\\small{PSCC 2024} \end{minipage} \hfill \small{Paris, France --- June 4 -- 7, 2024}}%
\begin{document}
%
% paper title
% Titles are generally capitalized except for words such as a, an, and, as,
% at, but, by, for, in, nor, of, on, or, the, to and up, which are usually
% not capitalized unless they are the first or last word of the title.
% Linebreaks \\ can be used within to get better formatting as desired.
% Do not put math or special symbols in the title.
\title{Socially Optimal Energy Usage via Adaptive Pricing}

%% To specify the authors when (number of affiliations <= 2)
\author{
\IEEEauthorblockN{Jiayi Li, Matthew Motoki and Baosen Zhang \\}
\IEEEauthorblockA{Electrical and Computer Engineering, University of Washington \\
\{ljy9712, mmotoki, zhangbao\}@uw.edu}
}

% \{ljy9712, mmotoki, zhangbao\}@uw.edu}

%% To specify the authors when (number of affiliations > 2)
% \author{\IEEEauthorblockN{Author n.1\IEEEauthorrefmark{1},
% Author n.2\IEEEauthorrefmark{2},
% Author n.3\IEEEauthorrefmark{3}, 
% Author n.4\IEEEauthorrefmark{3} and
% Author n.5\IEEEauthorrefmark{4}}
% \IEEEauthorblockA{\IEEEauthorrefmark{1} Department Name of Organization A\\
% Name of the organization A,
% Address A\\ Emails if wanted}
% \IEEEauthorblockA{\IEEEauthorrefmark{2} Department Name of Organization B\\
% Name of the organization B,
% Address B\\ Emails if wanted}
% \IEEEauthorblockA{\IEEEauthorrefmark{3} Department Name of Organization C\\
% Name of the organization C,
% Address C\\ Emails if wanted}
% \IEEEauthorblockA{\IEEEauthorrefmark{4}Department Name of Organization D\\
% Name of the organization D,
% Address D\\ Emails if wanted}
% }

% make the title area
\maketitle

% As a general rule, do not put math, special symbols or citations
% in the abstract
\begin{abstract}
% Using price signals to coordinate the electricity consumption of a group of users has been studied extensively. 
A central challenge in using price signals to coordinate the electricity consumption of a group of users is the operator's lack of knowledge of the users \edit{due to privacy concerns}.  In this paper, we develop a two\edit{-}time-scale incentive mechanism that alternately updates between the users and a system operator. As long as the users can optimize their own consumption subject to a given price, the operator does not need to know or attempt to learn any private information of the users \edit{for price design. 
Users adjust their consumption following the price and the system redesigns the price based on the users' consumption.} We show that under \edit{mild}
assumptions, this iterative process converges to the social welfare solution. In particular, the cost of the users need not always be convex and its consumption can be the output of a \edit{machine learning-based load control algorithm. }

% has been studied extensively. 
% The typical setup involves a system operator setting and broadcasting a price and users optimizing their utility subject to the price.  
% Typically, a system operator broadcasts a price, and users optimize their own actions subject to the price and internal cost functions. 
% A central challenge is the operator's lack of knowledge of the users \edit{due to privacy concerns}. 
% since users may not want to share private information. 
% Moreover, \edit{the prevalence of machine learning-based load control algorithms makes it difficulty for users to provide an analytical form for their cost functions.} 

% In this paper, we develop a two\edit{-}time-scale incentive mechanism that alternately updates between the users and a system operator. 
% % The system operator selects a price, and the users optimize their consumption. 
% % Based on the consumption, a new price is then computed by the system operator. 
% As long as the users can optimize their own consumption subject to a given price, the operator does not need to know or attempt to learn any private information of the users \edit{for price design. 
% Users adjust their consumption following the price and the system redesigns the price based on the users' consumption.} We show that under \edit{mild}
% % a wide range of 
% assumptions, this iterative process converges to the social welfare solution. In particular, the cost of the users need not be strictly convex and its consumption can be the output of a learning algorithm.  
\end{abstract}

\begin{IEEEkeywords}
Decentralized Demand Response, Iterative Algorithms, Pricing, Social Welfare Optimization
\end{IEEEkeywords}

% Use this to place sponsorships
\thanksto{The authors are partially supported by NSF award CNS-1931718 and the Washington Clean Energy Institute}

\section{Introduction} \label{sec:intro}

We study the coordination of the electricity usage of a group of users by an operator. This question has been studied extensively by the community, for example, in the context of demand response, customer aggregation, and virtual power plants~(see~\cite{pinson2014benefits,zhang2017robust,tindemans2015decentralized,sarker2015optimal} and the references within). The common setup is where each user is endowed with a cost (or utility) function, and the operator seeks to minimize a global cost function that is made up of the individual costs of the users and social welfare considerations. 

A central challenge in these problems is that the cost functions of the users may not be known to the operator. Furthermore, users themselves may not be able to provide an analytical description. For example, using machine learning to schedule and manage demand has been a very active area of research (see, e.g.~\cite{o2010residential,li2019constrained,vazquez2019reinforcement,wang2020deep,shi2020multi} and the references within). However, most learning algorithms minimize some cost or discomfort by changing \edit{the} behavior of the users or their devices and do not provide a cost function in term\edit{s} of power that can be easily optimized. It's important to note that we do not mean that a cost function does not exist, rather, it is often not revealed when learning algorithms are applied.  In many existing algorithms, however, the operator needs to somehow learn the cost functions of the users~\cite{Li19,khezeli2017risk,zheng2020incentive}. This restricts the users to having simple (typically quadratic) cost functions and is often too restrictive to be implemented in practice.

In this paper, we overcome this challenge by deploying a two\edit{-}time-scale incentive mechanism that alternatively updates between the operator and the users. More concretely, the actions of the users are characterized by their electricity consumption and the action of the operator is in selecting a price vector. This price is an externality to the users, and a user optimizes its actions by minimizing its own internal cost and the external cost induced by the presence of a price. 
The operator adaptively updates the price to influence the behavior of the users to achieve a socially beneficial solution. 
\edit{The adaptive updates can be treated as a negotiation process: the customer and the system operator are negotiating on a future price that optimizes the social welfare of the system.} 
Note that as long as users can optimize their action\edit{s} given a price, it suffices for the operator to just observe the actions of the users, and the operator does not need to know or learn users' private information. 

These types of incentive designs where an operator sets a payment (or sometimes called a tax) have received significant interest. \edit{Typically, the goal is to provide incentives to selfish users such that the resulting equilibrium is closer to the socially optimal solution~\cite{bacsar1984affine,ho1982control,paccagnan2019incentivizing}. In the context of electricity markets, a large number of demand response strategies have been proposed, ranging from alert/text-based signals~\cite{peplinski2023residential}, to pricing~\cite{vardakas2014survey}, to direct load control~\cite{chen2014distributed}. In this paper, we focus on developing pricing mechanisms.} The incentive mechanism we adopt is from~\cite{maheshwari2022inducing}, which is a type of dynamic incentive that evolves with the actions of the users. The key technical questions in these type\edit{s} of mechanisms are twofold: whether the price update converges, and if so, whether it leads to socially optimal user behaviors. 

Existing results in answering these two questions often require strong assumptions that do not readily apply to electricity markets. In~\cite{ratliff2020adaptive}, the cost function of the users and the system cost are assumed to be linear. The results in~\cite{Liy2021inducing} generalize the class of cost functions but assume the users\edit{'} strategies are convex in the price. This assumption often does not hold even in relatively simple games \edit{induced by the price}. \edit{The work of \cite{li2011optimal} is similar to ours but makes much stronger assumptions that the system and users’ costs are convex and smooth. Moreover, our algorithm circumvents the need for each user to determine the incremental demand at each step, which is challenging, especially for learning-based methods. } The work in~\cite{maheshwari2022inducing}, which acts as the impetus to this paper, shows that the price converges and leads to social\edit{ly} optimal behavior assuming strict convexity of the cost functions. 

In this paper, we relax the assumptions needed in~\cite{ratliff2020adaptive,Liy2021inducing,maheshwari2022inducing}, and show that the two\edit{-}time-scale algorithm leads to price dynamics that both converge and induce optimal social behavior of the users. For example, we show that in some settings, the only assumption needed is that a user's action (electricity consumption\edit{)} is decreasing in price. \edit{This assumption is intuitive because higher price naturally leads to lower consumption. }
This result would then apply to users who use learning algorithms to manage their demands. We also generalize the type of social costs that operators could consider. \edit{For example, in addition to widely used quadratic functions and time-of-use pricing, our framework provides a way to implement peak pricing through designing the social cost. }

In our proof of convergence, we construct a continuous\edit{-}time dynamical system and its associated Lyapunov functions. Then we show that the price leads to user actions that are optimal for the global optimization problem by analyzing the optimality conditions. The paper is organized as follows. In Section~\ref{sec:model}, we describe the setup of our model. In Section~\ref{sec:theory}, we provide the theoretical results, and Section~\ref{sec:simulation} illustrates these results with several simulated case studies. We conclude and give some thoughts in future directions in Section~\ref{sec:conclusion}. 

%   More specifically, based on a price, users solve their individual cost optimization problem; based on the action of the users, the price is updated. 
%
%
%This setup can accommodate a large class of user cost functions (e.g., they need not be convex or differentiable) and the operator does not need to learn what they are. The key is to use an ''externality'' term that captures the difference between the optimal solutions of individual users and the socially optimal solution~\cite{maheshwari2022inducing}.  We present the model and state the high-level results and some preliminary simulation results in this abstract. 

\section{Problem Formulation and Pricing Updates} \label{sec:model}
In this section, we formally state the problem of obtaining socially optimal energy usage and describe an iterative procedure to compute the price seen by each of the individual users. 

% Explain the high level connection between the global optimization, local optimization and price update. 

\subsection{Global Optimization Problem}
We consider a system with $N$ users. The action of each user is their energy consumption over $T$ time periods. For example, $T=24$ if we are considering hourly loads of the users in a day. Let $\x_i\in \mathbb{R}^T$ denote the electricity consumption of user $i$.\footnote{\edit{We use bold notation to indicate vectors and matrices.}} We assume $\x_i$ takes value in a compact set $\mathcal{X}_i \subset \mathbb{R}^T$.\footnote{The constraints that a user has can be encoded in this set.} The notation $x_{i,t}$ denote the energy consumption of user $i$ at the $t$'th period. Each user is interested in optimizing its cost function, $f_i:\mathbb{R}^T \rightarrow \mathbb{R}$. For example, if $f_i(\x_i)=\frac{1}{2} (\x_i-\bar{\x}_i)\edit{^2}$, the user can be interpreted as minimizing its energy usage to some predetermined setpoint $\bar{\x}_i$. Note that in this paper we do not assume any symmetry between the users, nor do we restrict $f_i$ to be in particular parametric forms. Of course, some assumptions on $f_i$'s are required, and these will be stated closer to the technical results \edit{section}. 

We also place a cost on the system in serving the \edit{total } energy demanded by the users. We do not distinguish between which user is using the energy, and the system cost is in the form of $g\left(\sum_{i=1}^N  \x_i\right)$. Note the sum in $g$ is over the users, and $g$ is a function from $\mathbb{R}^T$ to $\mathbb{R}$. Throughout this paper, we will assume that $g$ is convex or strictly convex. Several examples of $g$ include  $g\left(\sum_i \x_i\right)=c\left(\sum_i \x_i \right)$, where $c(\cdot)$ is the cost of procuring energy; $g\left(\sum_i \x_i\right)=||\sum_i \x_i||^2$, where the system tries to reduce its total energy consumption; and $g\left(\sum_i \x_i\right)=\max \left(\sum_i x_{i,1},\dots, \sum_i x_{i,T}\right)$, which represent a cost on the peak energy demand~\cite{yan2015enabling}. 

The system operator is interested in solving a global social welfare problem:
\begin{equation}\label{eqn:social_welfare}
    \min_{\x} \quad \sum_i f_i(\x_i) + g\left(\sum_i \x_i \right).
\end{equation}
The first term represents the sum of the users' costs, while the second represents a social or system cost. This problem has been studied in many settings. When the operator has full information about $f_i$'s and $g$, and can set the demand of the users, this is essentially a variant of the economic (load) dispatch problem. This has been extensively studied and readers can refer to~\cite{pinson2014benefits,huang2019demand} and the references within. 

In practice, it is often the case that the system operator does not know the exact cost functions of the users. This could stem from privacy concerns, where the user does not communicate their cost to the operator. Another way that incomplete information can arise is that the users themselves do not have a closed-form description of their own costs. Learning algorithms are becoming increasingly popular for demand management, where a user's action is trained without explicitly learning a cost model~\cite{motokiwaterheater,wang2020deep,vazquez2019reinforcement}. In this setting, an iterative algorithm is often used, where local steps and global steps alternate and try to converge to the solution of the social welfare problem in~\eqref{eqn:social_welfare}. 

When $f_i$'s are convex (or strictly convex), differentiable and frequent communications are possible, primal/dual type of gradient algorithms can be used to solve it. More efficient algorithms exist when $f_i$'s are assumed to be in particular parametric forms (mostly quadratic~\cite{Li19,khezeli2017risk}). Drawbacks of these approaches appear when the local problem is more complicated. For example, suppose user $i$ uses tabular $Q$-learning to determine its actions. Then it is difficult to apply existing distributed optimization algorithms. 

In this paper, we study a strategy to iteratively solve \eqref{eqn:social_welfare}, where we can weaken the existing assumptions on the users. We do this through price updates and more details are given next.

\subsection{User's Optimization Problem}
From a user's perspective, it receives a price vector, $\p\in \mathbb{R}^T$ from the system operator. We will define how the operator arrives at this price in the next section, but for now, we treat it as a given vector. 
% We assume that users are interested in minimizing the sum of a discomfort cost $f_i(x_i)$ and a monetary cost of power that is proportional to their energy consumption. Specifically, given $p$, 
User $i$ solves the following optimization problem:
\begin{equation}\label{eqn:local_opt_prob}
    \min_{\x_i \in \mathcal{X}_i} \quad  f_i(\x_i) + \p^\T \x_i. 
\end{equation}
We denote the solution of this problem as $\xis(\p)$. We do not consider the details of how user $i$ solves \eqref{eqn:local_opt_prob} to obtain $\xis(\p)$. If $f_i$ is available, the user may explicitly solve \eqref{eqn:local_opt_prob}. In~\cite{motokiwaterheater}, a user represents a water heater and has the goal of balancing the discomfort of receiving underheated or overheated water with the cost of power. Since the discomfort is easily expressed in terms of the temperature of the water but not in terms of power, a Q-learning algorithm is used to look up $\xis(\p)$. 

In this paper, we make the following assumption: 
\begin{assumption} \label{assump:local_unique}
Given a price $\p$, the solution $\xis(\p)$ to \eqref{eqn:local_opt_prob} is unique for all users $i$. 
\end{assumption}
This says that the behavior of a user is uniquely determined by the price it sees from the system operator. 
This condition is true under a wide range of conditions, and we note it is much weaker than the ones made in existing literature. \edit{For example,  it holds if $f_i$ is strictly convex (as assumed in~\cite{maheshwari2022inducing,ratliff2020adaptive,liu2014pricing, li2011optimal}, or if the user used a deterministic algorithm regardless of convexity~(not covered by the conditions in~\cite{maheshwari2022inducing,ratliff2020adaptive,liu2014pricing,li2011optimal}).} If a probabilistic algorithm is used, the results in the paper hold with respect to the averaged solution, but the analysis becomes much more cumbersome. \edit{A more refined understanding of possible stochastic behaviors is an important future direction for us.}

\edit{It's worth mentioning that this assumption is not strictly necessary for the adaptive algorithm to find socially optimal incentives. 
However, without this assumption, the updates may oscillate between several different solutions. 
When the adaptive algorithm does not converge to a unique solution, the notion of convergence becomes more difficult to characterize. 
In response to the potential oscillations that destabilize the system, the system operator could develop a consistent tie-breaking procedure (e.g. consuming at the earliest hour) so the system would converge to a fixed point. 
In addition, we could expand the allowed behavior of the agents by
considering ideas such as convergence in expectation. 
% Coordinating a group of stochastic users with is an important future direction for us.
}
% \bz{add some discussion here}
% Some $f_i$ would lead to more than one optimal solution $\xis(\p)$, but we assume each user follows its own convention to find the unique optimal usage pattern. 
% We make two assumptions:
% \begin{assumption}
%     a
% \end{assumption}

% Here, $f_i$ encodes user $i$'s preference for balancing the trade-off between discomfort and monetary costs. An important assumption we make is that given any $p$, there is a unique solution to \eqref{eqn:local_opt_prob}, denoted by $x_i^{*}(p)$, but otherwise are not concerned with how user $i$ solves \eqref{eqn:local_opt_prob}. For example, $x_i$ can be the energy consumed by user $i$'s water heater, and $f_i(x_i)$ can be the discomfort cost of receiving underheated water. In this case, the discomfort cost can be easily stated in terms of the temperature of the water but does not have a closed-form expression in terms of $x_i$. 

% Note that given $p$, each user's actions are independent of all other users. 

\subsection{Price Update}
Now we consider how $\p$ should be updated by the system operator. Following the scheme in \cite{maheshwari2022inducing}, we define the externality $\e$ as the marginal social cost arising from the term $g$:
\begin{equation}\label{eqn:e}
    \e(\x) = \nabla_\mathbf{z} g(\mathbf{z})|_{\mathbf{z}=\sum_i \x_i}
\end{equation}
\edit{where $\x = \{\x_i\}_{i=1}^N$}. The price is updated as
\begin{equation}\label{eqn:learning_dynamics}
    \p_{k+1} = (1-\beta_k)\p_k + \beta_k \e(\x^{*}(\p_k)). 
\end{equation}
\edit{where $\x^{*}(\p_k) = \{\x^{*}_i(\p_k)\}_{i=1}^N$ and} the following conditions hold $\sum_{k=1}^\infty \beta_k = \infty$ and $\sum_{k=1}^\infty \beta_k^2 < \infty$; e.g., $\beta_k = 1 / k$. These assumptions about the step size are standard assumptions that allow us to prove the convergence of the price updates.\footnote{\edit{We note that step sizes of $1/k$ are required for theoretical reasons; however, we find that experimentally that other step sizes can lead to the same fixed point with fewer iterations.}}
% analyze the convergence of the discrete-time learning dynamics using techniques from dynamical systems.

The main technical results of the paper are to show that under mild assumptions, the dynamics in \eqref{eqn:learning_dynamics} converge to a unique $\p^{*}$ and $\xis(\p^{*})$ is the minimizer of \eqref{eqn:social_welfare}. This price update scheme is attractive because the operator doesn't need to know or learn about the users.\footnote{In fact, the operator does not even need to know how many users there are in the system since it simply broadcasts the price to all users.} As long as the sum of decisions $\x_i$'s are observed, a unique $\p^{*}$ can be found to induce globally optimal behavior. 

\begin{algorithm}
    \caption{Adaptive Pricing: An iterative method that solves the global optimization problem (\ref{eqn:social_welfare}).}
    \label{alg:adaptive_pricing}
    \begin{algorithmic}[1]
        \State {Initialize $\p\in \mathbb{R}^T$}
        \For{$k = 1, 2, \dots$}

            \vspace{0.1cm} \State {\textit{// Solve Local Optimization}}
            \For{$i = 1, 2, \dots, N$}
                \State {$\xis(\p) = \arg \min_{\x_i \in \mathcal{X}_i} \;  f_i(\x_i) + \p^\T \x_i$}
            \EndFor
            
            \vspace{0.1cm} \State {\textit{// Price Update}}
            \State {Choose step size: $\beta  = 1 / k$ }
            \State {Compute externality: $\mathbf{e} = \nabla g\Big(\sum_{i=1}^N \xis(\p)\Big)$}
            \State {Compute new price: $\p = (1-\beta)\p + \beta \mathbf{e}$}

            %\State {Compute externality: $e = \nabla g\left(\sum_{i=1}^N x_i^{*}(p)\right)$}
            % \State {Compute new price: $p' = (1-\beta)p + \beta e$} % (x^{*}(p))}

            % \vspace{0.2cm} \State {\textit{// Check for Convergence}}
            % \If{$\| p' - p \| < \epsilon$}
            %     \State {\Return $p'$, $\{x_i^{*}(p')\}_{i=1}^N$}
            % \Else
            %     \State {Update price: $p = p'$}
            % \EndIf
            
        \EndFor
    \end{algorithmic}
\end{algorithm}

\section{Theoretical Results} \label{sec:theory}
\label{sec:theoretical_results}

The first step to analyzing the system in~\eqref{eqn:learning_dynamics} is to approximate it as a continuous system. A standard result in dynamical systems is that as  $k \rightarrow \infty$ and the step-sizes $\beta_k$ are asymptotically going to zero, \eqref{eqn:learning_dynamics} behaves as~\eqref{eqn:continuous_dynamics} \cite{maheshwari2022inducing, vivek1997stochastic}: 
\begin{equation}\label{eqn:continuous_dynamics}
\dot{\p} = \e(\x^{*}(\p)) - \p. 
\end{equation}
We want \eqref{eqn:continuous_dynamics} to have two properties. The first is that it has a unique equilibrium, and the second is that this equilibrium induces the solutions to the local optimization problem in \eqref{eqn:local_opt_prob} that's simultaneously solving the global problem in \eqref{eqn:social_welfare}. 
% In this section, we derive the requirements for the adaptive pricing-induced learning dynamics \eqref{eqn:learning_dynamics} to converge at a unique equilibrium.  As $k \rightarrow \infty$ and the step-sizes $\beta_k$ are asymptotically going to zero,  we can approximate Eq. (\eqref{eqn:learning_dynamics}) by the continuous-time dynamical system: 

The proof of these results involves constructing a Lyapunov function based on suitable assumptions on $f_i$'s and $g$. We feel it is easier to break the proof into two parts, and first consider the case of $T=1$ and then $T>1$. When $T=1$, the proof is fairly short and the intuition carries over to the $T>1$ case, where the math becomes more cumbersome. 

% The convergence of \eqref{eqn:continuous_dynamics} depends on two components: social cost function $g(x)$ and end-user's utility function $f(x)$. 
% The social cost function is subject to the system operator's design, depending on the global objectives.
% Each user's utility function would not be known to the operator: we strive to only impose mild requirements on each user's utility function to promote the practicality and generality of the adaptive pricing framework.
% We provide a theoretical guarantee on system convergence through constructing the appropriate Lyapunov function $V(p)$. 
% We derive the necessary but minimal conditions on $f(x)$ and $g(\sum_{i} x_i)$ for $\dot{V}(p) < 0$.
% We prove the convergence of single time-dimensional case ($T=1$) first and extend the results to multi-time-dimensional case ($T>1$).

\subsection{Single time-period case (T = 1)}
For a single time-dimensional case, $x_i \in \mathbb{R}$, $g$ is a scalar function, and the dynamical system in \eqref{eqn:continuous_dynamics} is also a scalar system.  We have the following result
\begin{theorem} \label{thm:t_single}
    Suppose Assumption~\ref{assump:local_unique} holds and $x_i^{*}(p)$ is decreasing for each $i$ and $g$ is convex and differentiable. Then the scalar dynamical system \eqref{eqn:continuous_dynamics} is globally asymptotically stable and has a unique fixed point $p^{*}$. The solutions $x_i^{*}(p^{*})$ solve the global optimization problem in~\eqref{eqn:social_welfare}. 
\end{theorem}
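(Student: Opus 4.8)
The plan is to treat the right-hand side of \eqref{eqn:continuous_dynamics} as the scalar map $F(p) = e(x^*(p)) - p = g'\!\left(S(p)\right) - p$, where $S(p) = \sum_i x_i^*(p)$, to show it is continuous and strictly decreasing, and then let standard scalar ODE theory plus a short optimality argument finish the job.

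First I would establish the regularity of $F$. Since each $\mathcal{X}_i$ is compact and the local objective $f_i(x_i) + p x_i$ is jointly continuous, Assumption~\ref{assump:local_unique} (single-valuedness of the argmin) together with the maximum theorem (Berge) makes each $x_i^*(p)$ a continuous function of $p$; hence $S(p)$ is continuous. Because $g$ is convex and differentiable on $\mathbb{R}$, its derivative $g'$ is continuous and non-decreasing, so $e(x^*(p)) = g'(S(p))$ is continuous. Combining $g'$ non-decreasing with the hypothesis that each $x_i^*(p)$, and therefore $S(p)$, is decreasing shows $g'(S(p))$ is non-increasing, and subtracting $p$ makes $F$ strictly decreasing.

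Next comes existence and uniqueness of the fixed point, followed by stability. Compactness of the $\mathcal{X}_i$ bounds $S(p)$ and hence $e(x^*(p))$, so $F(p)\to +\infty$ as $p\to-\infty$ and $F(p)\to-\infty$ as $p\to+\infty$; by the intermediate value theorem $F$ has a zero, and strict monotonicity makes it unique, call it $p^*$. For global asymptotic stability I would take the radially unbounded Lyapunov candidate $V(p) = \tfrac12 (p-p^*)^2$ and compute $\dot V = (p-p^*)F(p)$. Since $F(p^*)=0$ and $F$ is strictly decreasing, $F(p)$ and $(p-p^*)$ carry opposite signs for $p \neq p^*$, so $\dot V < 0$ away from $p^*$, and Lyapunov's theorem yields global asymptotic stability.

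Finally, the optimality claim, which is where the convexity of $g$ does the real work. At the fixed point $p^* = g'(\hat S)$, with $\hat S = \sum_i \hat x_i$ and $\hat x_i = x_i^*(p^*)$. By definition $\hat x_i$ minimizes $f_i(x_i) + p^* x_i$ over $\mathcal{X}_i$, so summing these inequalities against any feasible $x=(x_1,\dots,x_N)$ gives $\sum_i f_i(\hat x_i) + p^* \hat S \le \sum_i f_i(x_i) + p^* S$ with $S = \sum_i x_i$. Rearranging and invoking the gradient inequality $g(S) \ge g(\hat S) + g'(\hat S)(S - \hat S)$ with $g'(\hat S) = p^*$ then yields $\sum_i f_i(\hat x_i) + g(\hat S) \le \sum_i f_i(x_i) + g(S)$, i.e. $\hat x$ solves \eqref{eqn:social_welfare}. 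I expect the main obstacle to be the regularity step: without Assumption~\ref{assump:local_unique} the argmin is only a correspondence and $F$ need not be continuous, so the clean intermediate-value and Lyapunov arguments would break down. The optimality argument, by contrast, never uses convexity of the $f_i$ and goes through verbatim, which is precisely the point that lets the result cover non-convex user costs.
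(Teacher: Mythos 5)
Your proof is correct and follows the same overall skeleton as the paper's (strict monotonicity of $F(p)=e(x^*(p))-p$, hence a unique zero $p^*$; a Lyapunov argument for global asymptotic stability; and optimality from the users' first-order optimality plus convexity of $g$), but it differs in execution in three ways worth noting. First, you explicitly establish continuity of $x_i^*(p)$ via Berge's maximum theorem and single-valuedness, and you justify the limits $F(p)\to\pm\infty$ using compactness of the $\mathcal{X}_i$; the paper simply asserts that $F$ is strictly decreasing and "ranges from $\infty$ to $-\infty$," which silently presupposes continuity --- a strictly decreasing but discontinuous $F$ can jump across zero and have no fixed point at all, so your step fills a genuine gap (at the mild price of implicitly requiring each $f_i$ to be continuous, which the theorem statement does not list but which the paper's argument also needs). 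Second, you use the quadratic Lyapunov function $\tfrac12(p-p^*)^2$, whose derivative is $(p-p^*)F(p)<0$ by the sign structure of a strictly decreasing $F$, whereas the paper integrates $F$ itself, taking $V(p)=-\int_{p^*}^{p}[e(x^*(q))-q]\,dq$ with $\dot V=-[e(x^*(p))-p]^2$; both work, and yours has the advantage that radial unboundedness (needed for the \emph{global} claim) is immediate. Third, your optimality argument is direct: summing the users' optimality inequalities and applying the gradient inequality $g(S)\ge g(\hat S)+g'(\hat S)(S-\hat S)$ yields optimality of $x^*(p^*)$ with only weak inequalities, so you never invoke Assumption~\ref{assump:local_unique} there; the paper instead argues by contradiction, needing strict inequalities (and hence the uniqueness assumption) to contradict convexity of $g$. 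Your direct version is cleaner and slightly more general on this point, and it correctly isolates that convexity of the $f_i$ is never used --- which is exactly the feature the theorem is advertising.
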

The decreasing condition on $x_i^{*}(p)$ says that if the price increases, then the energy consumption of user $i$ will not increase (users will not consume more if energy becomes more expensive). This condition is satisfied for most types of cost functions. For example, if $f_i$'s are convex. The decreasing condition is also satisfied for situations where $f_i$'s are nonconvex~\cite{wan2022nonlinear}. Therefore, our pricing algorithm is applicable to a wider range of problems than existing pricing schemes that require both $f_i$'s and $g$ to be convex~\cite{maheshwari2022inducing}. 

\begin{proof}
We first show that a unique equilibrium of \eqref{eqn:continuous_dynamics} exists. When $T=1$, \eqref{eqn:e} simplifies to $e(x) = g'(\sum_{i} x_i)$. If $g$ is convex, its derivative is increasing and $e$ is an increasing function. Since $x_i^{*}(p)$ is decreasing in $p$, the function of $p$ obtained by summing over $i$,  $\sum_i x_i^{*}(p)$, is also decreasing in $p$. 
Using the fact that for scalar functions, the composition of an increasing function with a decreasing function is decreasing, we have $e(\sum_i x_{i}^{*} (p))$ being decreasing in $p$.

Looking at the right hand side of \eqref{eqn:continuous_dynamics}, $e(x^{*}(p))-p$ is a strictly decreasing function ranging from $\infty$ to $-\infty$ as $p$ goes from $-\infty$ to $\infty$. Therefore it takes the value $0$ at exactly one point, and we denote that point by $p^{*}$, and it is an equilibrium point of the dynamical system in \eqref{eqn:continuous_dynamics}. 

Next, we show that the dynamical system is globally asymptotically stable around $p^{*}$. We do this by constructing the following Lyapunov function:
\[ V(p) = -\int_{p^{*}}^p [e(x^{*}(q)) - q] \, dq.  \]
Using the fact that the integrand is strictly decreasing, is easy to verify that $V(p) > 0$ if $p \neq p^{*}$ and $V(p^{*})=0$. The time derivative of $V$ is 
\[
    \dot{V}(p)= V'(p) \dot{p} = -[e(x(p)) - p]^2,  
\]
which is negative except when $p=p^{*}$. 
Therefore, $p^{*}$ is globally asymptotically stable. 

Finally, we show that the $p^{*}$ induces globally optimal behavior. That is, the solutions of the local problems $x_i^{*}(p^{*})$ solves the global optimization problem in \eqref{eqn:social_welfare}. To show this, let $\hat{x}$ denote an optimal solution to \eqref{eqn:social_welfare}.  For notational simplicity, define $\hat{x}_s =\sum_i \hat{x}_i$ and $x_s^{*}=\sum_i x_i^{*}(p^{*})$. Suppose $\hat{x} \neq x^{*}(p^{*})$, and $\hat{x}$ achieves a lower social welfare cost. Then we have
\begin{equation}\label{eqn:hat_less_star}
\sum_i f_i(\hat{x}_i)+g(\hat{x}_s) < \sum_i f_i ( x_i^{*}(p^{*}))+g(x_s^{*}).
\end{equation} 
But $x_i^{*}(p^{*})$ are the minimizers of the local optimization problem in~\eqref{eqn:local_opt_prob}, we have
\begin{equation} \label{eqn:star_less_hat}
f_i ( x_i^{*}(p^{*}))+g'(x_s^{*}) x_i^{*}(p^{*}) < f_i ( \hat{x}_i)+g'(x_s^{*}) \hat{x}_i, \; \forall i, 
\end{equation}
where we use the fact that $p^{*}=g'(x_s^{*})$ and the inequality is strict because of the uniqueness assumption in Assumption~\ref{assump:local_unique}. Adding \eqref{eqn:hat_less_star} and \eqref{eqn:star_less_hat} and simplifying the expressions, we get 
\[ g(\hat{x}_s) < g(x_s^{*}) + g'(x_s^{*}) (\hat{x}_s-x_s^{*}).\]
This is the first-order condition of a strictly \emph{concave} function, and it contradicts the fact that $g$ is \emph{convex}. Hence the assumption that there exists a more optimal solution to \eqref{eqn:social_welfare} than $x^{*}$ is not possible. 
\end{proof}

\subsection{Multi-Time Period Case (T > 1)}
The analysis in the single time period case relies on the monotonicity of several functions. To extend that to the multi-time period case, we extend the notion of monotonicity to vector-valued functions:
\begin{definition} \label{def:monotone}
Consider a vector-valued function $h: \mathbb{R}^n \rightarrow \mathbb{R}^n$. We say that $h$ is increasing if $[h(\x)-h(\x')]^\T (\x-\x')\geq 0$. It is strictly increasing if the inequality is strict when $\x \neq \x'$. A function is (strictly) decreasing if its negation is (strictly) increasing. 
\end{definition}
We adopt Definition~\ref{def:monotone} since it is a natural generalization of monotonicity for a scalar function in the following sense: the gradient of a differentiable convex function is increasing.\footnote{We note there exists other definition of monotonicity for vector-valued functions. For example, sometimes a function is said to be monotone if it is monotone in every coordinate. This definition, however, is too restrictive for our purposes.}  

In the $T=1$ setting, the proof for Theorem~\ref{thm:t_single} has three parts. First, we showed that a unique equilibrium price exists, then we show that the dynamical system in \eqref{eqn:continuous_dynamics} is asymptotically stable around this price, and finally that this price induces socially optimal behavior. In the $T>1$ setting, the first and last parts generalize directly from the $T=1$ setting with minimal changes. However, asymptotic stability is considerably more difficult since the composition of two monotone functions is no longer monotone.\footnote{This is not even true for linear functions in the vector-valued case.} Therefore, we first state a lemma about the uniqueness and social optimality of the equilibrium price. Then we provide two conditions where the iterative algorithm converges asymptotically to this equilibrium. 

\begin{lemma} \label{lem:t_multiple}
    Consider the dynamical system in \eqref{eqn:continuous_dynamics}. If $\xis(\p)$ is a decreasing function of $\p$, then there exists a unique equilibrium $\p^{*}$. Furthermore, under Assumption~\ref{assump:local_unique}, $\x^{*}(\p^{*})$ is the optimal solution to social welfare problem in~\eqref{eqn:social_welfare}. 
\end{lemma}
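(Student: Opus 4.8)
The plan is to re-use the three-part structure of Theorem~\ref{thm:t_single}, but only its existence/uniqueness and social-optimality parts, since asymptotic stability is deferred to the two separate conditions promised after the lemma. The essential difference from $T=1$ is that every scalar monotonicity argument must be recast through the vector monotonicity of Definition~\ref{def:monotone}, and in particular the scalar composition trick (``increasing $\circ$ decreasing $=$ decreasing'') is no longer available, so existence and uniqueness must be argued separately rather than read off a single strictly monotone scalar function.

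For \emph{existence} I would argue by a fixed-point theorem. Since each $\mathcal{X}_i$ is compact, the aggregate $\mathbf{s}(\p):=\sum_i \xis(\p)$ lies in the compact set $\mathcal{S}=\sum_i \mathcal{X}_i$, so $\e(\x^{*}(\p))=\nabla g(\mathbf{s}(\p))$ takes values in the bounded set $\nabla g(\mathcal{S})$. Restricting the update map $\p\mapsto \e(\x^{*}(\p))$ to a large compact convex ball $K$ containing $\overline{\mathrm{conv}}\,\nabla g(\mathcal{S})$, it maps $K$ into itself. Under Assumption~\ref{assump:local_unique} together with continuity of the $f_i$, Berge's maximum theorem makes $\p\mapsto\xis(\p)$ continuous (the argmin correspondence is upper hemicontinuous and single-valued), so the map is continuous and Brouwer's theorem gives a fixed point $\p^{*}$ with $\p^{*}=\e(\x^{*}(\p^{*}))$.

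\emph{Uniqueness is the main obstacle}, precisely because $\e(\x^{*}(\cdot))$ need not be a decreasing map even though $\nabla g$ is increasing and $\x^{*}$ is decreasing. Suppose $\p$ and $\p'$ are both equilibria, so $\p=\nabla g(\mathbf{s})$ and $\p'=\nabla g(\mathbf{s}')$ with $\mathbf{s}=\mathbf{s}(\p),\ \mathbf{s}'=\mathbf{s}(\p')$. Because each $\xis$ is decreasing, $\mathbf{s}$ is decreasing, so $(\p-\p')^{\T}(\mathbf{s}-\mathbf{s}')\le 0$; because $g$ is convex, $\nabla g$ is increasing, so $(\nabla g(\mathbf{s})-\nabla g(\mathbf{s}'))^{\T}(\mathbf{s}-\mathbf{s}')\ge 0$. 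The crucial step is to substitute the equilibrium identity $\p-\p'=\nabla g(\mathbf{s})-\nabla g(\mathbf{s}')$ into the second inequality, turning its left-hand side into $(\p-\p')^{\T}(\mathbf{s}-\mathbf{s}')$; the two inequalities then pinch this quantity to exactly $0$. Strictness of one of the two monotone maps closes the argument: if $g$ is strictly convex then $0=(\nabla g(\mathbf{s})-\nabla g(\mathbf{s}'))^{\T}(\mathbf{s}-\mathbf{s}')$ forces $\mathbf{s}=\mathbf{s}'$, whence $\p=\nabla g(\mathbf{s})=\nabla g(\mathbf{s}')=\p'$; alternatively, if $\x^{*}$ is strictly decreasing, then $(\p-\p')^{\T}(\mathbf{s}-\mathbf{s}')=0$ directly gives $\p=\p'$. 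I expect this pinching, enabled by the fixed-point relation rather than by any monotonicity of the composition, to be the delicate point of the whole lemma.

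For \emph{social optimality} I would mirror the scalar argument verbatim, with inner products replacing scalar products. Let $\hat{\x}$ minimize~\eqref{eqn:social_welfare}, write $\hat{x}_s=\sum_i\hat{\x}_i$ and $x_s^{*}=\sum_i\xis(\p^{*})$, and suppose for contradiction that $\hat{\x}\neq\x^{*}(\p^{*})$ achieves a strictly lower cost, so $\sum_i f_i(\hat{\x}_i)+g(\hat{x}_s)<\sum_i f_i(\xis(\p^{*}))+g(x_s^{*})$. Local optimality of each $\xis(\p^{*})$ for~\eqref{eqn:local_opt_prob} at $\p^{*}=\nabla g(x_s^{*})$ gives $f_i(\xis(\p^{*}))+(\p^{*})^{\T}\xis(\p^{*})\le f_i(\hat{\x}_i)+(\p^{*})^{\T}\hat{\x}_i$ for all $i$, strict somewhere by Assumption~\ref{assump:local_unique}. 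Summing over $i$, substituting $\p^{*}=\nabla g(x_s^{*})$, and adding to the welfare inequality cancels the $f_i$ terms and yields $g(\hat{x}_s)<g(x_s^{*})+\nabla g(x_s^{*})^{\T}(\hat{x}_s-x_s^{*})$, which contradicts the first-order convexity inequality for $g$. Hence $\x^{*}(\p^{*})$ solves~\eqref{eqn:social_welfare}, completing the lemma.
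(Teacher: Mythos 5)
Your proof is correct, and on the existence/uniqueness half it takes a genuinely different route from the paper. The paper's own proof is two sentences: it declares the result ``analogous to the $T=1$ case,'' sweeps $\p$ from $\infty$ to $-\infty$, and asserts that $\e(\x^{*}(\p))-\p$ is strictly decreasing. That assertion implicitly composes the increasing map $\nabla g$ with the decreasing map $\x^{*}(\cdot)$ --- precisely the step the paper's own footnote admits is invalid for vector-valued maps --- and even if granted, strict decrease in the sense of Definition~\ref{def:monotone} only yields \emph{at most} one zero; it does not yield existence, since there is no intermediate value theorem in $\mathbb{R}^T$. You identified this and routed around it: Brouwer's theorem (with Berge's maximum theorem plus Assumption~\ref{assump:local_unique} supplying continuity of $\xis(\cdot)$) gives existence, and your ``pinch'' --- substituting the equilibrium identity $\p-\p'=\nabla g(\mathbf{s})-\nabla g(\mathbf{s}')$ into the monotonicity inequality for $\nabla g$ and playing the result against the monotonicity of $\mathbf{s}(\cdot)$ --- gives uniqueness using only the monotonicity of each map separately, never of the composition. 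What the paper's approach buys is brevity; what yours buys is an argument that actually survives the passage to $T>1$. The social-optimality half of your proof is the same as the paper's (derivative $\to$ gradient, scalar products $\to$ inner products), with the correct observation that Assumption~\ref{assump:local_unique} yields strictness for at least one user.

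Two caveats, both minor. Your existence step needs continuity of the $f_i$ (for Berge) and of $\nabla g$ (automatic for a differentiable convex function); the paper states neither, so making them explicit is an improvement rather than a deviation. Your uniqueness step, as you note, needs strictness of one of the two monotone maps, which is slightly stronger than the lemma's literal hypotheses ($\xis$ decreasing, $g$ convex); this is harmless in context, since both cases of Theorem~\ref{thm:t_multi} where the lemma is invoked make $g$ strictly convex. If you want the lemma exactly as stated, push the pinch one step further: the equality $(\nabla g(\mathbf{s})-\nabla g(\mathbf{s}'))^{\T}(\mathbf{s}-\mathbf{s}')=0$ forces $g$ to be affine along the segment from $\mathbf{s}'$ to $\mathbf{s}$, and a differentiable convex function that is affine on a segment has constant gradient on that segment, so $\p=\nabla g(\mathbf{s})=\nabla g(\mathbf{s}')=\p'$ with no strictness assumption at all.
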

\begin{proof}
    The proof of these two results is analogous to the $T=1$ case. By sweeping $\p$ from $\infty$ to $-\infty$ and using the fact that $\e(\x^{*}(\p))-\p$ is strictly decreasing, we can conclude that there exists a unique $\p^{*}$ such that  $\e(\x^{*}(\p^{*}))-\p^{*}=0$. 
    For the second part of the lemma, it suffices to change a derivative to a gradient in \eqref{eqn:star_less_hat} and the product between scalars to a dot product between vectors. All other steps remain the same.  
\end{proof}

Next, we state two conditions where the dynamical system in~\eqref{eqn:continuous_dynamics} converges: 
\begin{theorem} \label{thm:t_multi}
Suppose the local solutions $\xis(\p)$ are decreasing in $\p$. 
The dynamical system in~\eqref{eqn:continuous_dynamics} is asymptotically stable if one of the following conditions are true
\begin{enumerate}
    \item $g(\mathbf{z})=\frac{1}{2} \mathbf{z}^\T \mathbf{B} \mathbf{z}$ for some positive definite $\mathbf{B}$. 
    \item Each $f_i$ is twice differentiable and strictly convex and $g$ is strictly convex and differentiable.  
\end{enumerate}
\end{theorem}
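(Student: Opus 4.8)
The plan is to treat the two conditions separately, since they permit genuinely different structure: condition~1 allows arbitrary (even nonconvex) $f_i$ as long as $\xis$ is decreasing, whereas condition~2 trades this for strict convexity of every $f_i$ but admits a general strictly convex $g$. In both cases Lemma~\ref{lem:t_multiple} already supplies a unique equilibrium $\p^{*}$ satisfying $\e(\x^{*}(\p^{*}))=\p^{*}$, equivalently $\p^{*}=\nabla g(\x_s^{*}(\p^{*}))$ where I write $\x_s^{*}(\p)=\sum_i \xis(\p)$; note that $\x_s^{*}$ inherits the decreasing (monotone) property of Definition~\ref{def:monotone} as a sum of decreasing maps. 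What remains is to build, in each case, a Lyapunov function $V$ with $V(\p^{*})=0$, $V(\p)>0$ otherwise, and $\dot V<0$ away from $\p^{*}$.

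For condition~1, $\nabla g(\mathbf{z})=\mathbf{B}\mathbf{z}$ so the vector field is $\dot{\p}=\mathbf{B}\x_s^{*}(\p)-\p$. I would use the weighted quadratic $V(\p)=\tfrac12(\p-\p^{*})^{\T}\mathbf{B}^{-1}(\p-\p^{*})$, which is positive definite because $\mathbf{B}\succ0$. Differentiating along trajectories and inserting $\mathbf{B}^{-1}\p^{*}=\x_s^{*}(\p^{*})$ gives
\begin{equation*}
\dot V=(\p-\p^{*})^{\T}\big[\x_s^{*}(\p)-\x_s^{*}(\p^{*})\big]-(\p-\p^{*})^{\T}\mathbf{B}^{-1}(\p-\p^{*}).
\end{equation*}
The first term is $\le 0$ by the decreasing property of $\x_s^{*}$ and the second is $<0$ for $\p\neq\p^{*}$ since $\mathbf{B}^{-1}\succ0$, so $\dot V<0$ and the system is globally asymptotically stable.

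For condition~2 the clean device is to pass to convex conjugates. Because each $\xis(\p)$ uniquely minimizes $f_i(\x_i)+\p^{\T}\x_i$ over the compact $\mathcal{X}_i$, the value function $\Phi_i(\p):=\sup_{\x_i\in\mathcal{X}_i}\{-\p^{\T}\x_i-f_i(\x_i)\}$ is convex and differentiable with $\nabla\Phi_i(\p)=-\xis(\p)$ by the envelope theorem; summing, $\nabla\Phi(\p)=-\x_s^{*}(\p)$ for $\Phi=\sum_i\Phi_i$. I would then take $V(\p)=\Phi(\p)+g^{*}(\p)-\Phi(\p^{*})-g^{*}(\p^{*})$, where $g^{*}$ is the conjugate of $g$. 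Since $\nabla g^{*}=(\nabla g)^{-1}$ and the equilibrium gives $(\nabla g)^{-1}(\p^{*})=\x_s^{*}(\p^{*})$, one checks $\nabla V(\p^{*})=-\x_s^{*}(\p^{*})+\x_s^{*}(\p^{*})=0$, and as $g^{*}$ is strictly convex, $V$ is minimized uniquely at $\p^{*}$ with $V\ge0$. Writing $a=\x_s^{*}(\p)$ and $q=(\nabla g)^{-1}(\p)$ (so $\p=\nabla g(q)$), the derivative collapses to
\begin{equation*}
\dot V=(q-a)^{\T}\big[\nabla g(a)-\nabla g(q)\big]\le0,
\end{equation*}
which is strict unless $\nabla g(a)=\p$, i.e.\ unless $\p=\p^{*}$, by strict monotonicity of $\nabla g$. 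Finally, since $\x_s^{*}(\p)$ stays in the compact set $\sum_i\mathcal{X}_i$, $\nabla g(\x_s^{*}(\p))$ is bounded and the $-\p$ term forces every trajectory into a bounded invariant ball, so LaSalle's principle upgrades this to global convergence.

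The heart of the matter, and exactly the obstacle the authors highlight, is that $\e\circ\x^{*}=\nabla g\circ\x_s^{*}$ composes an increasing map with a decreasing map, and such a composition is not monotone when $T>1$; consequently the naive candidate $\tfrac12\|\p-\p^{*}\|^{2}$ does \emph{not} yield $\dot V\le0$. The real work is therefore in selecting a Lyapunov function matched to the nonlinearity: the $\mathbf{B}^{-1}$ weighting in condition~1 and the conjugate sum $\Phi+g^{*}$ in condition~2 are precisely what reduce $\dot V$ to a single application of monotonicity. I expect the most delicate technical points to be verifying the envelope/differentiability step for $\Phi_i$ over the constraint set $\mathcal{X}_i$ (rather than in an unconstrained interior) and controlling the domain of $g^{*}$, together with the boundedness argument needed to make the conclusion global rather than merely local.
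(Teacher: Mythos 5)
Your condition-1 argument coincides with the paper's: the same $\mathbf{B}^{-1}$-weighted quadratic Lyapunov function \eqref{eqn:lya_quad}, the same substitution $\mathbf{B}^{-1}\p^{*}=\x_s^{*}(\p^{*})$, and the same splitting of $\dot V$ into a monotonicity term and a negative definite quadratic. For condition 2, however, you take a genuinely different route. The paper's Lyapunov function \eqref{eqn:lya_bregman} is a Bregman divergence of $g$ (plus a cross term) evaluated along the primal aggregate $\x_s^{*}(\p)=\sum_i\xis(\p)$; its time derivative contains the Jacobian $\nabla_{\p}\x_s^{*}(\p)$, which the paper shows is negative definite by applying the implicit function theorem to the unconstrained first-order condition $\nabla_{\x_i} f_i(\x_i)+\p=0$, yielding $\nabla_{\p}\xis(\p)=-\left[\mathbf{H}_{\x_i}(\xis(\p))\right]^{-1}$ --- this is exactly where twice differentiability and strict convexity of the $f_i$ enter (along with the implicit assumptions that the local solutions are interior to $\mathcal{X}_i$ and the Hessians are invertible). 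Your dual construction $V=\Phi+g^{*}$ with $\nabla\Phi=-\x_s^{*}$ (Danskin/envelope, valid under Assumption~\ref{assump:local_unique} alone) sidesteps all of that: no Jacobian of $\x_s^{*}$, no second derivatives, no interiority, and $\dot V=(q-a)^{\T}\left[\nabla g(a)-\nabla g(q)\right]\le 0$ follows from one application of strict monotonicity of $\nabla g$, with equality forcing $\e(\x^{*}(\p))=\p$ and hence $\p=\p^{*}$ by Lemma~\ref{lem:t_multiple}. In this sense your proof is both cleaner and stronger: it effectively establishes the relaxation that the authors only \emph{conjecture} after their proof, since strict convexity and twice differentiability of $f_i$ are never used beyond guaranteeing unique local responses. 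The price is the one you flag: the domain of $g^{*}$. For the paper's motivating peak-pricing example $g=\lambda\LSE$, $\operatorname{dom} g^{*}$ is a bounded (scaled) simplex, so $V$ is simply undefined at arbitrary initial prices; your bounded-invariant-set/LaSalle patch is therefore not a refinement but an essential step, and it additionally requires verifying that $\nabla g\bigl(\sum_i\mathcal{X}_i\bigr)$ (and a compact convex neighborhood of its hull, reached by every trajectory in finite time) lies in the interior of $\operatorname{dom} g^{*}$, where $g^{*}$ is guaranteed differentiable with $\nabla g^{*}=(\nabla g)^{-1}$. Modulo writing out that containment and the Danskin regularity over the compact $\mathcal{X}_i$, your argument is correct and would be a worthwhile alternative to the paper's.
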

Before presenting the proof, we comment on why these settings are potentially interesting and practical. 
A quadratic cost is the most common form used in the literature~\cite{kirschen2018fundamentals} since it includes standard pricing schemes such as flat rates and time-of-use (ToU) rates. The assumption on each of the users is minimal since we just require that an increase in energy price leads to a decrease in energy consumption. 

The second setting generalizes the price function by assuming a more strict form of the local costs $f_i$'s. The reason this result is useful is that it includes peak pricing, which is being adopted by a large number of users \edit{\cite{DOE_peak,Smart_peak,newsham2010effect,wang2015time}}. More specifically, peak pricing has the form of $g(\mathbf{z})=c \cdot \max(z_1,\dots,z_T)$, where $c$ is some constant. 
This function is convex, but not strictly convex nor differentiable. But it can be approximated by the so-called LogSumExp function, defined as 
\begin{equation}\label{eqn:LSE}
\LSE(z_1,\dots,z_T)= \frac{1}{\alpha} \log (\exp(\alpha z_1)+\dots+\exp(\alpha z_T)).
\end{equation}
The LSE function is strictly convex if at least one of its arguments is positive and approximates the max function arbitrarily as $\alpha$ grows~\cite{nielsen2019monte}. 

\begin{proof}
    We prove Theorem~\ref{thm:t_multi} by constructing appropriate Lyapunov functions. In the first case, we use a quadratic Lyapunov function
    \begin{equation} \label{eqn:lya_quad}
        V(\p)=\frac{1}{2} (\p -\p^{*})^\T \mathbf{B}^{-1} (\p-\p^{*}).  
    \end{equation}
    It is clear that $V(\p)=0$ if $\p=\p^{*}$ and $V(\p) >0$ otherwise. 

    In the second setting, we use a different Lyapunov function, defined as 
    \begin{equation} \label{eqn:lya_bregman}
    \begin{split}
       V(p) & = g\left(\sum_{i} \x_i^{*}(\p)\right) - g\left(\sum_{i} \x_i^{*}(\p^{*})\right) \\
     &  - \nabla g\left(\sum_{i} \x_i^{*}(\p^{*})\right)^{\T}\left(\sum_{i} (\x_i^{*}(\p)-\x_i^{*}(\p^{*}))\right) \\
    &  - \sum_{i} (\x_i^{*} (\p)- \x_i^{*}(\p^{*}))^{\T}(\p-\p^{*}). 
    \end{split}
    \end{equation}
    This function is based on the Bregman's divergence of $g$, and using the fact that $g$ is strictly convex, $V(\p)=0$ if $\p=\p^{*}$ and $V(\p) >0$ otherwise.

    In Appendix~\ref{app:t_multi}, we show that the time derivative of both Lyapunov functions are negative when $\p \neq \p^{*}$.
\end{proof}
The proof above illustrates the difficulty of showing a general result when $T>1$ since we need two different Lyapunov functions. However, we believe this is a consequence of our proof technique, and we conjecture that the twice differentiable and strictly convex condition in Theorem~\ref{thm:t_multi} can be relaxed to be just $\xis(\p)$ is strictly decreasing for all $i$.

\section{Simulation Results} \label{sec:simulation}

In this section, we conduct numerical studies that illustrate the theoretical results in this paper, in particular the convergence properties of the update in~\eqref{eqn:learning_dynamics}. Our code for the numerical simulations can be found at \url{https://github.com/socially-optimal-energy}.

\begin{figure}[ht] 
    \centering
    \includegraphics[width=\columnwidth]{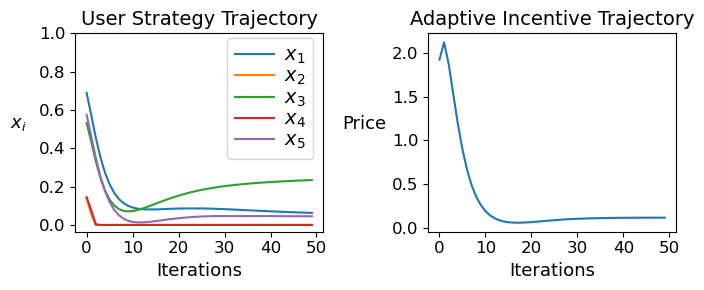}
    \caption{Convergence of user actions and price incentive for a system with 5 users and a single time-period. Both the actions and the price converge quickly.}
    \label{fig:N5T1}
\end{figure}

% \begin{figure}[ht] 
% \centering
% \includegraphics[width=8cm, height=6cm]{old_user_strategy_trajectory.png}
% \includegraphics[width=8cm, height=6cm]{old_pice_trajectory.png}
% \caption{Convergence of user actions and price incentive for a system with 5 users and a single time-period. Both the actions and the price converge quickly.}
% \label{fig:N5T1}
% \end{figure}

\subsection{Single Time-Period}
The first is a simple single time-period ($T=1$) example that clearly shows the convergence to the global solution (see Fig.~\ref{fig:N5T1}). Suppose there are 5 users ($N=5$) that have actions $x_i \in \mathbb{R}$ which can automatically adapt to changes in price.\footnote{\edit{In this example, we assume that the users and utility are in a negotiating a future price (e.g., a day-ahead price) and the algorithm converges by the time the price needs to be set and the load is realized.}} The user cost functions are $f_i(x_i)=\frac{1}{2}(x_i-\bar{x}_i)^2$ and the global cost is $g(\sum_i x_i)=(\sum_i x_i)^2$. Both actions and the price converge quickly, and it is easy to check that they converge to the optimal values (which can be computed in this simple setup). 

\subsection{Peak Pricing of Multiple Time-Periods}
Here, we consider multiple periods, with $T=24$. Then user~$i$'s action is $\x_i \in \mathbb{R}^{24}$. For simplicity, let $f_i(\x_i)=\frac{1}{2}(\x_i-\bar{\x}_i)^2$ for some given $\bar{\x}_i$'s. Suppose the system adopts peak pricing, approximated as $g(\sum_i \x_i)=\lambda\LSE(\sum_i \x_i) $ for some $\lambda$, where the LSE function was defined in~\eqref{eqn:LSE}.  
Following~\eqref{eqn:e}, the externality of each user is calculated as: 
\begin{equation*}
    \e(\x) = \nabla_\mathbf{z} g(\mathbf{z})|_{\mathbf{z}=\sum_i \x_i} = \frac{1}{ \sum_{t=1}^T \exp{(\alpha \mathbf{z}_t})}\exp{(\alpha \mathbf{z})}.
\end{equation*}
For 10 users, the convergence of price and total demand is shown in Fig.~\ref{fig:N10T24}, where both converge quickly to their final values. The comparison of initial and final price and load profiles are shown in Fig.~\ref{fig:profiles}. The initial prices (chosen uniformly at random) induce a total (summed over the users at each time period) load profile that is uneven. After convergence, load profiles become much more even, which is what we expect to see when the peak system load is minimized. Note the price that achieves this still shows variations across time periods. 

\begin{figure}[ht] 
    \centering
    \includegraphics[width=\columnwidth]{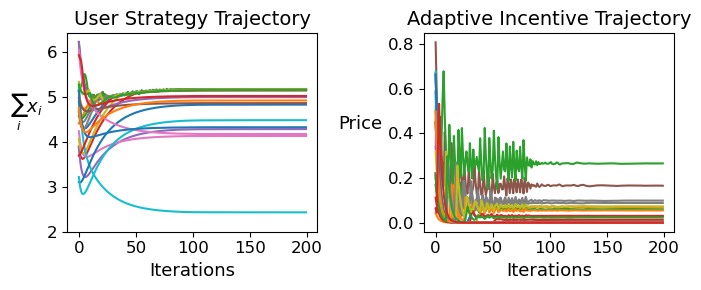}
    \caption{The top figure shows the convergence in the sum of the users{'} actions at each time period. The bottom figure shows the convergence of price.}
    \label{fig:N10T24}
\end{figure}

% \begin{figure}[ht] 
% \centering
% \includegraphics[width=8cm, height=6cm]{old_10_user_24_hr_total_demand.png}
% \includegraphics[width=8cm, height=6cm]{old_10_user_24_hr_p.png}
%     \caption{The top figure shows the convergence in the sum of the users{'} actions at each time period. The bottom figure shows the convergence of price.}
%     \label{fig:N10T24}
% \end{figure}

\begin{figure}[ht] 
    \centering
    \includegraphics[width=0.95\columnwidth]{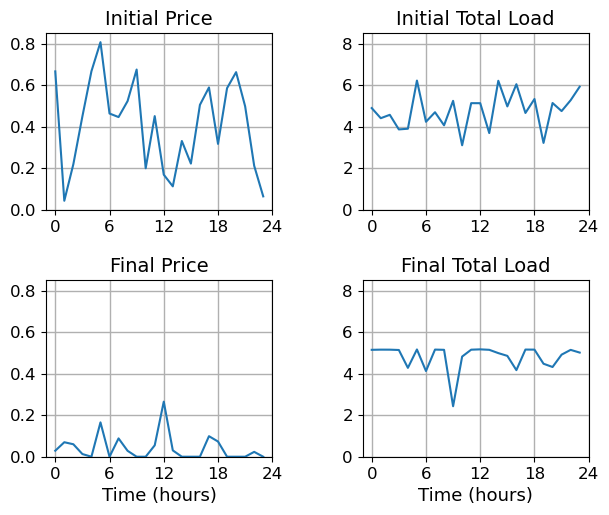}
    \caption{The initial and converged price and demand profiles demonstrate that this adaptive pricing framework is effective in reducing system peak. }
    \label{fig:profiles}
\end{figure}

% \begin{figure}[ht] 
% \centering
% \includegraphics[width=\columnwidth, height=3.5cm]{}
% \includegraphics[width=\columnwidth, height=3.5cm]{}
%     \caption{The initial and converged price and demand profiles demonstrate that this adaptive pricing framework is effective in reducing system peak. }
%     \label{fig:profiles}
% \end{figure}

\subsection{Water Heater Load Optimization with Q-Learning}
Now we consider a case where users{'} actions are determined by a learning algorithm and show that the $f_i$'s need not be differentiable or convex. We consider a group of water heaters where $\x_i \in \mathbb{R}^{96}$ represents its daily load profile measured every 15-minutes, $f_i$ represents user discomfort, and $g$ is the $\LSE$ function. \edit{In this example, users can flexibly shift their energy consumption earlier to avoid high prices and store the energy (hot water) until it is needed.} We model and optimize each user following the learning algorithm in~\cite{motokiwaterheater} and numerically check that $\x_i^{*}$ is decreasing in $\p$.

For simplicity, we assume that \edit{each user's demand for hot water is a binary vector drawn from a sequence of Bernoulli random variables.} We assume the demand and all system parameters are known and so given a price $\p$, we can solve for $\x^{*}(\p)$ exactly using dynamic programming. Fig.~\ref{fig:water_heater_simulation}, shows the initial and final price and the corresponding distributions of total energy consumption of all the users. Initially, the system cost is 12.781 and the user cost is 0.630. At convergence, the system cost decreases to 10.397 and the user cost to 0.627. Overall, the adaptive pricing procedure reduces the social welfare cost by 2.387 (17.8\%).

% f_i(x_i^{t})=\max \{\bar{x}_i - x_i^{t},\, 0\} \cdot \mathbf{1}\{w_i^{t} > 0\} This is not correct unless $x_i$ is the temperature of the water heater and $u_i$ is the 1-day load profile. 

\begin{figure}[ht] 
    \centering
    \includegraphics[width=0.95\columnwidth]{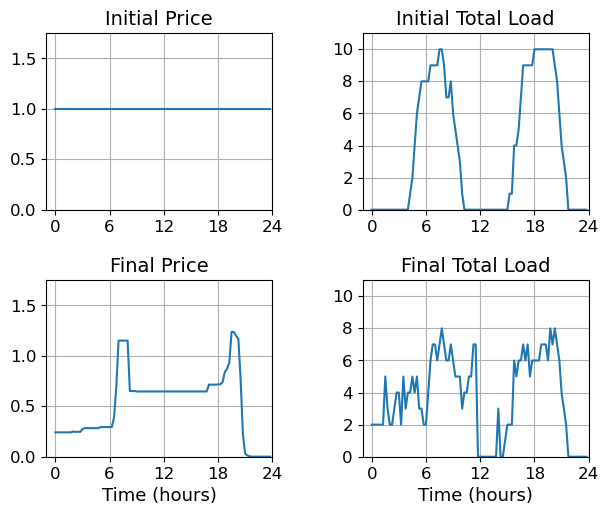}
    \caption{Simulation results for water heater optimization. There \edit{are} 10 users minimizing discomfort using a Q-learning algorithm~\cite{motokiwaterheater}, and the system operator tries to minimize the peak load. The initial price is chosen to be flat, which leads to a profile with high peaks. After several iterations, the price becomes uneven, and the final load has much lower peaks. }
    \label{fig:water_heater_simulation}
\end{figure}

% \begin{figure}[ht]
%     \includegraphics[width=\columnwidth]{}
%     \caption{Simulation results for water heater optimization. There \edit{are} 10 users minimizing discomfort using a Q-learning algorithm~\cite{motokiwaterheater}, and the system operator tries to minimize the peak load. The initial price is chosen to be flat, which leads to a profile with high peaks. After several iterations, the price becomes uneven, and the final load has much lower peaks. }
%     \label{fig:water_heater_simulation}
% \end{figure}

\section{Conclusion} \label{sec:conclusion}
In this paper, we consider the problem of using price signals to coordinate the electricity consumption of a group of users. We develop a two\edit{-}time-scale incentive mechanism that alternately updates between the users and a system operator. We assume that users can optimize their actions given a price, and no private information is exchanged between the users and the system operator. In turn, the system operator does not need to learn the cost of the users. We show that the iterative algorithm converges to the social\edit{ly} optimal solution.  

\edit{The algorithms proposed in this paper can be implemented by any smart device that can receive and respond to information from a system operator. For example, a home with a Nest thermostat could easily negotiate with a system to reach the price equilibrium without any manual intervention of a user. A key assumption in this paper is that each of the users are price takers, in the sense that they lack any market power to explicitly manipulate the prices. Thus an important future direction for us is to consider the price-anticipatory behavior of users.}

\appendices
\section{Proof of Theorem~\ref{thm:t_multi}} \label{app:t_multi}
Consider the first condition in Theorem~\ref{thm:t_multi}. Again we let $\x_s$ denote the sum $\sum_i \x_i$. The time derivative of the Lyapunov function in \eqref{eqn:lya_quad} is
\begin{align*}
    \dot{V}(\p) & = \nabla V(\p)^\T \dot{\p} \\
    &= (\p-\p^{*})^\T \mathbf{B}^{-1} (\e(\x_s^{*}(\p))-\p) \\
    &\stackrel{(a)}{=} (\p-\p^{*})^\T \mathbf{B}^{-1} (\e(\x_x^{*}(\p))-\p-\e(\x_s^{*}(\p^{*}))+\p^{*}) \\
    &\stackrel{(b)}{=} (\p-\p^{*})^\T \mathbf{B}^{-1} (\mathbf{B} (\x_s^{*}(\p)-\x_s^{*}(\p^{*}))-\p+\p^{*}) \\
    &=  (\p-\p^{*})^\T (\x_s^{*}(\p)-\x_s^{*}(\p^{*})) \\
    & \;\;\;\; - (\p-\p^{*})^\T \mathbf{B}^{-1} (\p-\p^{*}) \\
    &\stackrel{(c)}{\leq} 0,
\end{align*}
where $(a)$ follows from the fact that $\p^{*}$ is the equilibrium; $(b)$ follows from the definition $\e=\nabla g$; $(c)$ follows from the assumption that each $\x_i^{*}(\p^{*})$ is decreasing (and hence their sum is) and $\mathbf{B}$ is positive definite. Furthermore, $V(\p)=0$ only if $\p=\p^{*}$.

Next, we consider the second condition in Theorem~\ref{thm:t_multi}. The time derivative of the Lyapunov function in \eqref{eqn:lya_bregman} is
\begin{align}
    \dot{V}(\p) & = \nabla V(\p)^\T \dot{\p} \nonumber\\
    &= \nabla V(\p)^\T (\e(\x_s^{*}(\p))-\e(\x_s^{*}(\p^{*}))-\p+\p^{*}) \nonumber \\
    &  = (\e(\x_s^{*}(\p))-\e(\x_s^{*}(\p^{*}))-\p+\p^{*})^\T \left[\nabla_{\p} \x_s^{*} (\p) \right] \nonumber \\
     & \;\;\;\; (\e(\x_s^{*}(\p))-\e(\x_s^{*}(\p^{*}))-\p+\p^{*}) \label{eqn:g_x} \\
    & - (\x_s^{*}(\p)-\x_s^{*}(\p^{*}))^\T (\e(\x_s^{*}(\p))-\e(\x_s^{*}(\p^{*}))) \nonumber \\
    & + (\x_s^{*}(\p)-\x_s^{*}(\p^{*}))^\T(\p-\p^{*}). \nonumber 
\end{align}
It's easy to see that $\dot{V}(\p)=0$ when $\p=\p^{*}$. We first look at the last two terms. Since $\e=\nabla g$ and $g$ is strictly convex, $\e$ is strictly increasing, and $(\x_s^{*}(\p)-\x_s^{*}(\p^{*}))^\T (\e(\x_s^{*}(\p))-\e(\x_s^{*}(\p^{*})))>0$ if $\p \neq \p^{*}$. By assumption, $\x_s^{*}(\p)$ is decreasing and $(\x_s^{*}(\p)-\x_s^{*}(\p^{*}))^\T(\p-\p^{*}) \leq 0$. 

Next, consider the term in \eqref{eqn:g_x}. The object $\nabla_{\p} \x_s^{*} (\p)$ is the gradient of the vector $\x_s^{*}(\p)$, hence it is a $T \times T$ matrix and it suffices to show $\nabla_{\p} \x_s^{*} (\p)$ is negative definite. Using the fact that summations of negative definite matrices are negative definite, it is enough to show that $\nabla_{\p} \x_i^{*} (\p)$ is negative definite for all $i$. Recall $\x_i^{*}(\p)$ is the optimal solution to the local optimization problem, and it satisfies the first\edit{-}order condition 
\[ \nabla_{\x_i} f_i (\x_i)+\p=0.\]
Taking \edit{the} derivative of $\p$ on both sides, applying the chain rule, and rearranging leads to:
\[ \mathbf{I} = - \left[ \nabla_{\p} \x_i^{*}(\p)\right] \left[\mathbf{H}_{\x_i}(\x_i^{*}(\p))\right]  \]
where $\left[\mathbf{H}_{\x_i}(\x_i^{*}(\p))\right]$ is the Hessian of $f_i$ evaluated at $\x_i^{*}(\p)$. Since $f_i$ are assumed to be strictly convex and twice differentiable, $\left[\mathbf{H}_{\x_i}(\x_i^{*}(\p))\right]$ is positive definite. Rearranging the above equation gives $ \nabla_{\p} \x_i^{*}(\p) =- \left[\mathbf{H}_{\x_i}(\x_i^{*}(\p))\right]^{-1}.$
Therefore $\nabla_{\p} \x_i^{*}(\p)$ is negative definite and  $\dot{V}(\p)<0$ when $\p \neq \p^{*}$. 

\bibliographystyle{IEEEtran} 
\bibliography{mybib.bib}

% that's all folks
\end{document}